\documentclass[a4paper]{article}
\usepackage{chao}
\usepackage{algo}

\title{A Note on Interdiction of Linear Minimization Problems}
\author{Yu Cong \footnote{\email{yucong143@gmail.com}, University of Electronic Science and Technology of China.} \and Kangyi Tian\footnote{\email{kangyitian947@gmail.com}, University of Electronic Science and Technology of China.}}
\date{\today}

\DeclareMathOperator*{\opt}{OPT}

\begin{document}
\maketitle

\begin{abstract}
Motivated by the FPTAS for connectivity interdiction of Huang \etal{}
\cite{huang_fptas_2024}, we isolate the part of the argument that does not use
cuts.  The setting is a minimization problem over a feasible-set family
$\mathcal F$ with a linear objective $w(S)=\sum_{e\in S}w(e)$.  After dualizing
the interdiction budget, deletion can be absorbed into truncated weights
$w_\lambda(e)=\min\{w(e),\lambda c(e)\}$.  At an optimal Lagrange multiplier
$\lambda^*$, the unknown optimal interdiction witness is a strict
$2$-approximate minimizer of the reweighted problem.  Thus an exact algorithm
can be obtained whenever one can optimize $w_{\lambda^*}$ over $\mathcal F$,
enumerate all its $2$-approximate minimizers, and solve the remaining knapsack problem.
\end{abstract}

\section{Model}

Let $E$ be a finite ground set and let $\mathcal F\subseteq 2^E$ be the family
of feasible sets of an underlying linear minimization problem
\[
    \min_{S\in\mathcal F} w(S),
\]
where $w(S)=\sum_{e\in S}w(e)$ for weights $w:E\to \Z_+$.  We are also given
an interdiction cost $c:E\to \Z_+$ and a budget $b\in \Z_+$.

The linear minimization interdiction problem considered here is
\begin{equation}\label{eq:interdiction}
    \opt
    =
    \min \set{w(S\setminus R)\;|\;S\in\mathcal F,\; R\subseteq S,c(R)\leq b},
\end{equation}
where $c(R)=\sum_{e\in R}c(e)$.  This is equivalent to first deleting an
arbitrary set $R$ with $c(R)\leq b$ and then minimizing $w(S\setminus R)$ over
$S\in\mathcal F$, because only $R\cap S$ affects the value of a chosen feasible
set $S$.

\section{Lagrangian relaxation}

It is helpful to first look at \eqref{eq:interdiction} as an integer program,
even though we do not write variables for the feasible-set constraints:
\[
\begin{aligned}
\opt=\min&\quad w(S\setminus R)\\
\text{s.t.}&\quad S\in\mathcal F,\quad R\subseteq S,\\
&\quad c(R)\leq b.
\end{aligned}
\tag{IP}
\]
The set $S$ is the object chosen by the original minimization problem, and
$R$ is the part of $S$ removed by the interdiction budget.  For a fixed $S$,
choosing $R$ is a knapsack-like deletion problem; the global difficulty is that
we do not know which feasible set $S$ will be optimal after deletion.

The Lagrangian relaxation prices the single budget constraint.  For a multiplier
$\lambda\geq 0$, move the constraint $c(R)\leq b$ into the objective:
\[
    \Phi(\lambda)
    =
    \min_{\substack{S\in\mathcal F\\R\subseteq S}}
        \bigl(w(S\setminus R)+\lambda(c(R)-b)\bigr).
\]
For every fixed $\lambda$, this is a lower bound on the interdiction optimum:
budget-feasible pairs get a non-positive correction term
$\lambda(c(R)-b)$.  The Lagrangian dual is $\Lambda=\max_{\lambda\geq 0}\Phi(\lambda)$ and we denote a maximizer by $\lambda^*$. We assume $\Lambda>0$ since otherwise every feasible set $S$ would satisfy $c(S)\leq b$, so the interdiction problem would reduce to the original optimization problem.

Now remove the constant term $-\lambda b$ from $\Phi(\lambda)$ and focus on the
inner minimization
\[
    L(\lambda)
    =
    \min_{\substack{S\in\mathcal F\\R\subseteq S}}
        \bigl(w(S\setminus R)+\lambda c(R)\bigr),
\]
so that $\Phi(\lambda)=L(\lambda)-\lambda b$.  For a fixed $S$ and $\lambda$,
each element $e\in S$ has only two relevant choices: If $e$ is included in $R$, its cost will be $\lambda c(e)$; Otherwise the cost is $w(e)$.
Thus the best truncated weight of whether including $e$ in $R$ is $w_\lambda(e)=\min\{w(e),\lambda c(e)\}$.
Because both $w$ and $c$ are linear,
\[
\begin{aligned}
    \min_{R\subseteq S}\bigl(w(S\setminus R)+\lambda c(R)\bigr)
    &=\sum_{e\in S}\min\{w(e),\lambda c(e)\}\\
    &=w_\lambda(S).
\end{aligned}
\]
Therefore we have $L(\lambda)=\min_{S\in\mathcal F} w_\lambda(S)$, which is the original linear optimization problem under the truncated weight.

The function $\Phi$ is the lower envelope of finitely many lines in $\lambda$,
so it is piecewise-linear and concave.

\begin{lemma}\label{lem:lagrangian-lower-bound}
For every $\lambda\geq 0$, $\Phi(\lambda)\leq \opt$.  Hence
$\Lambda\leq \opt$.
\end{lemma}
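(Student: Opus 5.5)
The argument is standard weak duality for a single priced constraint. Fix $\lambda\geq 0$ and pick an optimal pair $(S^\circ,R^\circ)$ for \eqref{eq:interdiction}. Such a pair exists because $E$ is finite; we also use that $\mathcal F$ is nonempty, since otherwise $\opt$ is undefined. This pair satisfies $S^\circ\in\mathcal F$, $R^\circ\subseteq S^\circ$, $c(R^\circ)\leq b$ and $w(S^\circ\setminus R^\circ)=\opt$. The constraints $S\in\mathcal F$ and $R\subseteq S$ of (IP) are kept in the inner minimization defining $\Phi(\lambda)$. Hence $(S^\circ,R^\circ)$ is admissible there, and
\[
    \Phi(\lambda)\leq w(S^\circ\setminus R^\circ)+\lambda\bigl(c(R^\circ)-b\bigr).
\]

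Next I will bound the correction term. Since $\lambda\geq 0$ and $c(R^\circ)-b\leq 0$, we have $\lambda(c(R^\circ)-b)\leq 0$. Therefore $\Phi(\lambda)\leq w(S^\circ\setminus R^\circ)=\opt$. This proves the first claim for every $\lambda\geq 0$.

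For the second claim, $\Lambda=\max_{\lambda\geq 0}\Phi(\lambda)$. The maximum is attained: $\Phi$ is concave and piecewise linear, being the lower envelope of finitely many lines. It is also bounded above by $\opt$ by the first part, so it cannot increase without bound. In any case, $\Lambda$ is a supremum of quantities each at most $\opt$, so $\Lambda\leq\opt$ whether or not the maximum is attained.

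There is no real obstacle here. The only point requiring care is the sign condition $\lambda\geq 0$, which makes budget-feasible pairs receive a non-positive penalty. One could alternatively go through $L(\lambda)=\min_{S} w_\lambda(S)$ and the truncated weights, but that detour is unnecessary.
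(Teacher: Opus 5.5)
Your proof is correct and is essentially the paper's argument. Both plug a budget-feasible pair into the Lagrangian and use $\lambda\geq 0$ and $c(R)\leq b$ to drop the non-positive term $\lambda(c(R)-b)$; the paper minimizes over all feasible pairs where you fix an optimal one, and your remark that $\Lambda\leq\opt$ holds as a supremum bound whether or not the maximum is attained is a harmless refinement.
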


\begin{proof}
Let $(S,R)$ be feasible for \eqref{eq:interdiction}.  Since $c(R)\leq b$, $w(S\setminus R)+\lambda(c(R)-b)\leq w(S\setminus R)$.
Minimizing the left-hand side over all pairs $(S,R)$, and then minimizing the
right-hand side only over budget-feasible pairs, gives
$\Phi(\lambda)\leq \opt$.
\end{proof}

\begin{lemma}\label{lem:feasible-active}
Assume $\lambda^*$ is a finite maximizer of $\Phi$.  Then there is a pair
$(S^{LD},R^{LD})$ attaining $L(\lambda^*)$ such that $c(R^{LD})\leq b$.
Consequently,
\[
    L(\lambda^*)\geq \opt.
\]
\end{lemma}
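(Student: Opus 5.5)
The plan is to use the structure of $\Phi$ as the lower envelope of finitely many lines and to read off the sign of its right derivative at the maximizer. For each pair $(S,R)$ with $S\in\mathcal F$ and $R\subseteq S$, define the affine function
\[
    f_{S,R}(\lambda)=w(S\setminus R)+\lambda\bigl(c(R)-b\bigr),
\]
so that $\Phi(\lambda)=\min_{(S,R)} f_{S,R}(\lambda)$ over a finite index set. Call a pair \emph{active} at $\lambda^*$ if $f_{S,R}(\lambda^*)=\Phi(\lambda^*)$. Equivalently, the pair attains $L(\lambda^*)$, since $\Phi$ and $L$ differ only by the constant $-\lambda^* b$.

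\textbf{Step 1: the right derivative of $\Phi$ at $\lambda^*$.} I will show that it equals $\min\{c(R)-b : (S,R)\text{ active at }\lambda^*\}$. Because there are finitely many lines, the non-active lines are separated from $\Phi(\lambda^*)$ by some positive gap. Hence for all sufficiently small $\varepsilon>0$ they stay strictly above the active ones on $[\lambda^*,\lambda^*+\varepsilon]$. On that interval $\Phi$ is therefore the minimum of the active lines, which all pass through the same point at $\lambda^*$. This minimum is the active line of smallest slope. This finite-envelope argument is the only step needing care, and I expect it to be routine.

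\textbf{Step 2: optimality forces a non-positive right derivative.} Since $\lambda^*$ maximizes $\Phi$ over $[0,\infty)$ and $\lambda^*+\varepsilon$ is also admissible, we have $\Phi(\lambda^*+\varepsilon)\le\Phi(\lambda^*)$. So the right derivative is $\le 0$. This holds whether $\lambda^*=0$ or $\lambda^*>0$, so no case split is needed; only the right side matters. By Step 1 there is an active pair $(S^{LD},R^{LD})$ with $c(R^{LD})-b\le 0$, i.e.\ $c(R^{LD})\le b$. This pair attains $L(\lambda^*)$, which proves the first claim.

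\textbf{Step 3: the inequality $L(\lambda^*)\ge\opt$.} Here
\[
    L(\lambda^*)=w(S^{LD}\setminus R^{LD})+\lambda^* c(R^{LD})\ge w(S^{LD}\setminus R^{LD}),
\]
using $\lambda^*\ge 0$ and $c\ge 0$. The pair $(S^{LD},R^{LD})$ is feasible for \eqref{eq:interdiction}, so $w(S^{LD}\setminus R^{LD})\ge\opt$. The assumption $\Lambda>0$ and Lemma~\ref{lem:lagrangian-lower-bound} are not needed for this direction.
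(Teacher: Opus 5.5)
Your proof is correct and follows essentially the same route as the paper. The paper argues by contradiction that if every active line had slope $c(R)-b>0$ then $\Phi$ would increase just to the right of $\lambda^*$; this is your Steps 1--2 (the right derivative equals the minimum active slope), and your Step 3 is the paper's final chain $L(\lambda^*)=w(S^{LD}\setminus R^{LD})+\lambda^*c(R^{LD})\geq \opt$, with your explicit argument that non-active lines stay strictly above the envelope for small $\varepsilon$ filling in a detail the paper leaves implicit.
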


\begin{proof}
If every pair attaining $L(\lambda^*)$ had $c(R)>b$, then every active line in
the lower envelope defining $\Phi$ would have positive slope at $\lambda^*$.
For sufficiently small $\delta>0$, the value of the lower envelope would then
increase from $\lambda^*$ to $\lambda^*+\delta$, contradicting the optimality of
$\lambda^*$.

Thus some active pair $(S^{LD},R^{LD})$ has $c(R^{LD})\leq b$.  This pair is
feasible for \eqref{eq:interdiction}, so
$w(S^{LD}\setminus R^{LD})\geq \opt$.  Therefore
\[
    L(\lambda^*)
    =
    w(S^{LD}\setminus R^{LD})+\lambda^*c(R^{LD})
    \geq \opt.
\]
\end{proof}

\section{The main observation}

\begin{theorem}\label{thm:two-approx-witness}
Let $(S^*,R^*)$ be an optimal solution to \eqref{eq:interdiction}.  If
$\Lambda>0$, then
\[
    L(\lambda^*)
    \leq
    w_{\lambda^*}(S^*)
    \leq
    L(\lambda^*)+b\lambda^*
    <
    2L(\lambda^*).
\]
In particular, $S^*$ is a strict $2$-approximate minimizer of
$\min_{S\in\mathcal F} w_{\lambda^*}(S)$.
\end{theorem}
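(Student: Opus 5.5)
The three inequalities can be proved one at a time, using only Lemmas~\ref{lem:lagrangian-lower-bound} and~\ref{lem:feasible-active} and the identity $L(\lambda)=\min_{S\in\mathcal F} w_\lambda(S)$.

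\emph{First inequality.} Since $S^*\in\mathcal F$ and $L(\lambda^*)$ is the minimum of $w_{\lambda^*}$ over $\mathcal F$, we have $L(\lambda^*)\le w_{\lambda^*}(S^*)$ immediately.

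\emph{Second inequality.} I would bound $w_{\lambda^*}(S^*)$ by the cost of the particular deletion $R^*$. Elementwise truncation gives
\[
w_{\lambda^*}(S^*)\le w(S^*\setminus R^*)+\lambda^* c(R^*)\le \opt+\lambda^* b,
\]
using optimality of $(S^*,R^*)$ and $c(R^*)\le b$. Lemma~\ref{lem:lagrangian-lower-bound} gives $\opt\le L(\lambda^*)$ after rewriting $\Phi(\lambda^*)\le\opt$? No: that lemma bounds $\Phi$ from above by $\opt$, which is the wrong direction. The right tool is Lemma~\ref{lem:feasible-active}, which gives $\opt\le L(\lambda^*)$. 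Substituting yields $w_{\lambda^*}(S^*)\le L(\lambda^*)+b\lambda^*$.

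\emph{Strict third inequality.} We need $b\lambda^*<L(\lambda^*)$, which is equivalent to $\Phi(\lambda^*)=L(\lambda^*)-\lambda^*b>0$. This is exactly the standing assumption $\Lambda=\Phi(\lambda^*)>0$.

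The final claim, that $S^*$ is a strict $2$-approximate minimizer, follows by chaining the three inequalities.

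The only step needing care is the existence of a finite maximizer $\lambda^*$, which Lemma~\ref{lem:feasible-active} requires. Let $\lambda_0$ be the smallest $\lambda$ with $\lambda c(e)\ge w(e)$ for all $e$. For $\lambda\ge\lambda_0$ we have $w_\lambda=w$, so $\Phi(\lambda)=\min_S w(S)-\lambda b$, which is nonincreasing in $\lambda$ (strictly decreasing if $b>0$). Since $\Phi$ is concave and piecewise linear, it therefore attains its maximum on the compact interval $[0,\lambda_0]$, and $\lambda^*$ is finite.

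If $b=0$ the argument is even simpler: every $\lambda$ satisfies the bounds, and the strict inequality is just $L(\lambda^*)>0$, i.e.\ $\Lambda>0$.
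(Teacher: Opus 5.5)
Your proof is correct and matches the paper's argument step for step: the lower bound comes from the definition of $L(\lambda^*)$, the upper bound from the particular deletion $R^*$ together with $\opt\le L(\lambda^*)$ from Lemma~\ref{lem:feasible-active}, and strictness from $\Lambda=L(\lambda^*)-\lambda^*b>0$. Your check that a finite maximizer exists is a sensible addition the paper leaves implicit (if $c(e)=0$ is allowed, take $\lambda_0$ over the elements with $c(e)>0$ only). You should drop the closing claim that for $b=0$ ``every $\lambda$ satisfies the bounds'', because the middle inequality needs $\opt\le L(\lambda)$, which holds at a maximizer but can fail for small $\lambda$.
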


\begin{proof}
The lower bound follows immediately from the definition of $L(\lambda^*)$:
\[
    L(\lambda^*)=\min_{S\in\mathcal F}w_{\lambda^*}(S)
    \leq w_{\lambda^*}(S^*).
\]
For the upper bound, use the particular deletion set $R^*$ inside the definition
of $w_{\lambda^*}(S^*)$:
\[
\begin{aligned}
    w_{\lambda^*}(S^*)
    &\leq w(S^*\setminus R^*)+\lambda^*c(R^*) \\
    &= \opt+\lambda^*c(R^*) \\
    &\leq \opt+\lambda^*b \\
    &\leq L(\lambda^*)+\lambda^*b,
\end{aligned}
\]
where the last inequality is \autoref{lem:feasible-active}.  Finally,
$\Lambda=L(\lambda^*)-\lambda^*b>0$, so
$L(\lambda^*)+\lambda^*b<2L(\lambda^*)$.
\end{proof}

\begin{remark}
    The strict $2$-approximation analysis generalizes to non-negative set function objectives.
    For connectivity interdiction, $S^*$ is the optimal interdiction cut, so it is
    among the strict $2$-approximate min-cuts in the graph with capacities
    $w_{\lambda^*}$.
\end{remark}

\section{Algorithmic template}

\begin{figure}  
\begin{algo}
\underbar{\textsc{Linear-Minimization-Interdiction}}$(E,\mathcal F,w,c,b)$:\\
\;compute a maximizer $\lambda^*$ of $\Phi(\lambda)=L(\lambda)-\lambda b$\\
\;compute the truncated weight $w_{\lambda^*}$\\
\;enumerate every $S\in\mathcal F$ with $w_{\lambda^*}(S)<2L(\lambda^*)$\\
\;for each enumerated $S$:\\
\;\; compute $g_b(S)=\min\{w(S\setminus R):R\subseteq S,\ c(R)\leq b\}$\\
\;return the pair $(S,R)$ with minimum value
\end{algo}
\caption{Template for solving interdiction version of linear minimization problem.}
\label{fig:alg}
\end{figure}

The theorem gives the a general template shown in \autoref{fig:alg}.
$\lambda^*$ can be found using parametric search techniques.

\begin{lemma}[\cite{salowe_parametric}]\label{lem:para}
Let $S(n)$ be the complexity of computing $L(\lambda)=\min_{H\in\mathcal F} w_\lambda(H)$ for fixed $\lambda$ (where $n$ is the size of the input), then one can compute $\lambda^*$ using parametric search in $O(S(n)^2)$ time.

If there is a parallel algorithm that solves $L(\lambda)$ for fixed $\lambda$ using $P(n)$ processors in time $T(n)$, then one can compute $\lambda^*$ using parametric search in $O( S(n)T(n)\log P(n)+T(n)P(n) )$ time.
\end{lemma}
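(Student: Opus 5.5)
This lemma is attributed to Megiddo-style parametric search (as presented by Salowe), so the plan is to instantiate that framework rather than build a new argument. The key structural fact is already established: $\Phi(\lambda)=L(\lambda)-\lambda b$ is concave and piecewise linear, with $L(\lambda)=\min_{S\in\mathcal F}w_\lambda(S)$. Consequently, for a fixed value $\lambda_0$, one evaluation of $L$ yields a decision procedure. Let $S_0$ be an optimal set at $\lambda_0$. We compute $c$ of the elements of $S_0$ whose truncated weight is attained by $\lambda_0 c(e)$, i.e.\ the optimal deletion set $R_0$, which is a supergradient of $L$ at $\lambda_0$. If $c(R_0)-b>0$ (for all optimal choices, or via a perturbation at $\lambda_0\pm\epsilon$), then $\lambda^*>\lambda_0$. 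Symmetrically, if $c(R_0)-b<0$, then $\lambda^*<\lambda_0$. Otherwise $\lambda_0$ is itself a maximizer, by the same slope argument as in \autoref{lem:feasible-active}. Each such test costs $O(S(n))$, with the one-sided perturbation handled symbolically.

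Next, I will run the sequential algorithm for $L(\lambda)$ at the unknown $\lambda^*$. Each comparison it performs is between quantities depending on $\lambda$. Since $w_\lambda(e)=\min\{w(e),\lambda c(e)\}$ is piecewise linear with a single breakpoint $w(e)/c(e)$, I will first treat these $|E|$ breakpoints by binary search, using the decision procedure, to locate the interval of $\lambda^*$ between consecutive breakpoints. Inside that interval every $w_\lambda(e)$ is a fixed linear function of $\lambda$. Hence, under the standard assumption that the algorithm only adds and compares weights, every comparison reduces to the sign of a linear function of $\lambda$, with one critical root. Resolving each root with the decision procedure costs $O(S(n))$. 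There are at most $S(n)$ comparisons, which gives $O(S(n)^2)$ in total. At the end, the maintained interval certifies the combinatorial outcome of $L$ at $\lambda^*$, and $\lambda^*$ is recovered as the endpoint where the slope $c(R)-b$ changes sign.

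For the parallel statement, I will follow Megiddo's batching: simulate the parallel algorithm step by step. In each of the $T(n)$ rounds there are at most $P(n)$ independent critical values. I sort them, or take medians, and binary search with the decision procedure. This uses $O(\log P(n))$ calls of cost $S(n)$ per round, plus $O(P(n))$ work per round for the simulation and median selection. This yields $O(S(n)T(n)\log P(n)+T(n)P(n))$.

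The main obstacle is the legitimacy of the parametric framework for this $L$. Megiddo's method needs each comparison to be decided by a low-degree (here linear) function of $\lambda$. The truncation $\min\{w,\lambda c\}$ breaks linearity, and I intend to fix that with the breakpoint binary search described above. Second, I must ensure that the decision procedure correctly identifies the side of $\lambda^*$ even at ties, where several optimal pairs have different $c(R)$. Here I will rely on the left and right derivatives of the concave $\Phi$, obtained by evaluating $L$ with a symbolic infinitesimal perturbation of $\lambda_0$. I will then cite \cite{salowe_parametric} for the generic bookkeeping.
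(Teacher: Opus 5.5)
Your proposal is correct and takes the same route as the paper: the paper gives no proof beyond citing Salowe's account of Megiddo's parametric search, and you instantiate exactly that framework. Your ingredients are a side-of-$\lambda^*$ oracle from one symbolically perturbed evaluation of $L$, simulation of the sequential algorithm for $O(S(n)^2)$, and median-based batching of the $P(n)$ comparisons per parallel round. Your explicit hypotheses (that the algorithm for $L$ only adds and compares weights, and that ties are resolved via one-sided slopes) are the conditions the paper leaves implicit.
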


Computing $g_b(S)$ is essentially solving a knapsack problem on groundset $S$ and takes $\tilde O(m+\frac{1}{\e^2})$ time for an $(1+\e)$-approximation \cite{10.1145/3618260.3649730}.
This algorithmic template shows the interdiction version of a minimization problem can be reduced to polynomially many knapsack problem if all strict $2$-approximations of a reweighted original problem can be enumerated in polynomial time.

\subsection{Application on Connectivity Interdiction}

Let $G=(V,E)$ be an undirected multigraph, and let $\mathcal F$ be the family of
all nontrivial cuts $\delta(U)$, where $\emptyset\neq U\subsetneq V$.  The
interdiction problem is
\[
    \min_{\substack{C\in\mathcal F,\;R\subseteq C\\c(R)\leq b}}
        w(C\setminus R).
\]
If there is a cut $C$ with $c(C)\leq b$, then the optimum is $0$: remove all
edges in $C$.  This case is detected by one min-cut computation under capacities
$c$, so below assume every cut has $c$-cost larger than $b$.

For a fixed $\lambda$, put capacity $w_\lambda(e)=\min\{w(e),\lambda c(e)\}$
on every edge.  Then
$
    L(\lambda)
    =
    \min_{C\in\mathcal F} w_\lambda(C),
$
so evaluating $L(\lambda)$ is just a global min-cut computation in
$(G,w_\lambda)$.  Using the deterministic almost-linear time min-cut algorithm
of Li \cite{Li_2021}, this takes $\tilde O(m)$ time.  Therefore,
\autoref{lem:para} gives a $\tilde O(m^2)$-time algorithm for computing $\lambda^*$.
If randomization is allowed, we can find $\lambda^*$ in near linear time.
Min-cut can be computed with high probability in $O(m\log^2 n)$ work and $O(\log^3 n)$ depth \cite{Anderson_Blelloch_2021}. Brent's law \cite{Gustafson_2011} shows that the running time is $T(n,P)=O(\frac{m\log^2 n}{P}+\log^3 n)$ with $P$ processors. Setting $P=O(m)$ and plugging the parallel algorithm into \autoref{lem:para} give a randomized $\tilde O(m)$-time algorithm for $\lambda^*$.

Let $L^*=L(\lambda^*)$.  By \autoref{thm:two-approx-witness}, the optimal
interdiction cut $C^*$ satisfies
$
    w_{\lambda^*}(C^*)<2L^*
$.
Thus $C^*$ is one of the strict $2$-approximate min-cuts in
$(G,w_{\lambda^*})$.  Karger \cite{Karger2000} showed that the number of
$\alpha$-approximate min-cuts is $O(n^{\floor{2\alpha}})$.  Hence the number of
cuts with value strictly smaller than $2L^*$ is $O(n^3)$, and they can be
enumerated in randomized $O(n^3)$ time\footnote{We note that only succinct representation of cuts (with respect to Karger's tree packing) are needed and the cut edges can be recovered while solving the knapsack. However, it is not known if one can deterministically enumerate all $\alpha$-approximate min-cuts in time $O(n^{\floor{2\alpha}})$.}\cite{Karger2000}.
To sum up, our framework gives an FPRAS for connectivity interdiction with running time $\tilde O(m+n^3(m+\frac{1}{\e^2}))$.

Compared with \cite{huang_fptas_2024}, our analyais is based on LP methods and is more intuitive. An algorithmic improvement is that we use parametric search thus avoiding the enumeration of $\lambda^*$ (the optimum of a normalized min-cut problem in \cite{huang_fptas_2024}).

\bibliographystyle{plain}
\bibliography{ref}

\end{document}